\newcommand{\cA}{{\mathcal A}}
\newcommand{\cQ}{{\mathcal Q}}
\newcommand{\cF}{{\mathcal F}}
\newcommand{\cR}{{\mathcal R}}
\journalname{}
\begin{document}

\title{Self-sustaining autocatalytic networks within open-ended  reaction systems
}
\subtitle{}


\author{Mike Steel}


\institute{M. Steel \at
              Biomathematics Research Centre, University of Canterbury, Christchurch, New Zealand \\
              Tel.: +64-3-364-2987 ext 7688\\
              Fax: +64-3-364-2587\\
              \email{mike.steel@canterbury.ac.nz}           
}

\date{Received: date / Accepted: date}

\maketitle

\begin{abstract}
Given any finite and closed chemical reaction system, it is possible to efficiently determine whether or not it contains a `self-sustaining and collectively autocatalytic' subset of reactions, and to find such subsets when they exist.  However, for systems that are potentially open-ended (for example, when no prescribed upper bound is placed on the complexity or size/length of molecules types), the theory developed for the finite case breaks down.  We investigate a number of subtleties that arise in such systems that are absent in the finite setting, and present  several new results. 
\keywords{chemical reaction system \and autocatalytic network \and $\omega$-consistency}
\end{abstract}

\section{Introduction}
\label{intro}
Consider any system of chemical reactions, in which certain molecule types catalyse reactions and where there is a pool of  simple molecule types available from the environment (a `food source').  One can then ask whether, within this system, there is a  subset of reactions that is both self-sustaining (each molecule can be constructed starting just from the food source) and collectively autocatalytic (every reaction is catalysed by some molecule produced by the system or present in the food set) \cite{kau1},\cite{kau2}.   This notion of `self-sustaining and collectively autocatalytic'  needs to be carefully formalised (we do so below), and is relevant to some basic questions such as how biochemical metabolism  began at the origin of life \cite{hig}, \cite{sm}, \cite{vas}.  A simple mathematical framework for formalising and studying such self-sustaining autocatalytic networks has been developed -- so-called `RAF (Reflexively-autocatalytic and F-generated) theory'. This theory includes an algorithm to determine whether such networks exists within a larger system, and for classifying these networks; moreover, the theory allows us to calculate the probability of the formation of such systems within networks based on the ligation and cleavage of polymers, and a random pattern of catalysis.

However, this theory relies heavily on the system being closed and finite. In certain settings, it is useful to consider polymers of arbitrary length being formed (e.g. in generating the membrane for a protocell \cite{horn}). In these and other unbounded chemical systems, interesting  complications arise for RAF theory,  particularly where the catalysis of certain reactions is possible only by molecule types that are  of greater complexity/length than the reactants or product of the reactions in question.  In this paper, we extend earlier RAF theory to  deal with unbounded chemical reaction systems. As in some of our earlier work, our analysis ignores the dynamical aspects, which are dealt with in other frameworks, such as `chemical organisation theory' \cite{dit}; here we concentrate instead on just the pattern of catalysis and the availability of reactants.

\subsection{Preliminaries and definitions}

In this paper, a {\em chemical reaction system} (CRS) consists of (i) a set $X$ of molecule types, (ii) a set $\cR$ of reactions, (iii) a pattern of catalysis $C$ that describes which molecule(s) catalyses which reactions, and (iv) a distinguished subset $F$ of $X$ called the {\em food set}.

We will denote a CRS as a quadruple  $\cQ = (X,\cR, C, F)$, and encode the pattern of catalysis $C$ by specifying a subset of $X \times \cR$ so that $(x,r) \in C$ precisely if
molecule type $x$ catalyses reaction $r$. See Fig.~\ref{fig1} for a simple example (from \cite{sm}).

\begin{figure}[htb]
\centering
\includegraphics[scale=1.0]{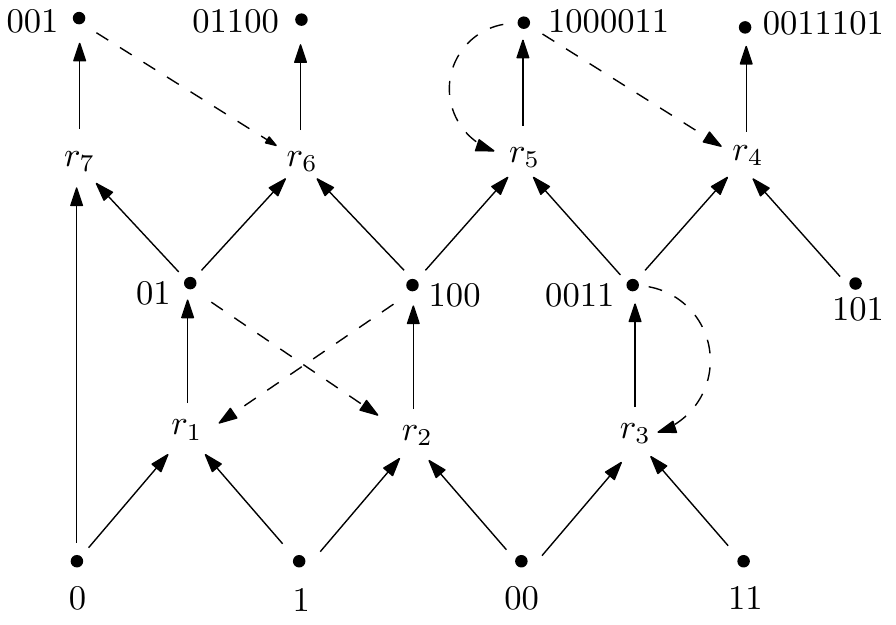}
\caption{A simple CRS based on polymers over a two-letter alphabet (0,1), with a food set $F=\{0,1, 00, 11\}$ and seven reactions. Dashed arrows indicate catalysis; solid arrows show reactants entering a reaction and products leaving. In this CRS there are exactly four RAFs (defined below), namely $\{r_1, r_2\}$, $\{r_3\}$, $\{r_1, r_2, r_3\}$, and $\{r_1, r_2, r_3, r_5\}$.  }
\label{fig1}
\end{figure}

In certain applications, $X$ often consist of -- or at least contain --  a set of polymers (sequences) over some finite alphabet $\cA$ (i.e. chains $x_1x_2 \cdots x_r$, $r\geq 1$, where $x_i \in \cA$), as in Fig.~\ref{fig1}; such polymer systems are particularly relevant to RNA or amino-acid sequence models of early life. Reactions involving such polymers  typically involve cleavage and ligation (i.e. cutting and/or joining polymers), or adding or deleting a letter to an existing chain.   Notice that if no bound is put on the maximal length of the polymers, then both $X$ and $\cR$ are infinite for such networks, even when $|\cA|=1$.  

In this paper we do not necessarily assume that $X$ consists of polymers, or that the reactions are of any particular type. Thus, a reaction can be viewed formally as an ordered pair $(A,B)$ consisting of a multi-set $A$ of elements from $X$ (the reactants of $r$) and a multi-set $B$ of elements of $X$ (the products of $r$); but we will mostly use the equivalent and more conventional notation of writing a reaction in the form:
$$r= (a_1+a_2 +\cdots+a_k \rightarrow b_1+b_2+\cdots+b_l),$$ where the $a_i$'s (reactants of $r$) and $b_j$'s (products of $r$) are elements of $X$, and $k, l\geq 1$ (e.g.  $x \rightarrow y, x+ x \rightarrow y$ and $y \rightarrow x+x'$ are reactions).

In this paper, we extend our earlier analysis of RAFs to
the general (finite or infinite) case and find that certain subtleties arise that are absent in the finite case.
We  will mostly assume the following conditions  (A1) and (A2), and sometimes also (A3).
\begin{itemize}
\item[(A1)] $F$ is finite; 
\item[(A2)]  each reaction $r \in \cR$ has a finite set of reactants, denoted $\rho(r)$, and a finite set of products, denoted $\pi(r)$;
\item[(A3)]  for any given finite set $Y$ of molecule types, there are only finitely many reactions $r$ with $\rho(r)=Y$.
\end{itemize}
Given a subset $\cR'$ of $\cR$, we say that a subset $W \subseteq X$ of molecule types is {\em closed} relative to $\cR'$ if $W$ satisfies the property
$r \in \cR' \mbox{ and } \rho(r) \subseteq W \Rightarrow \pi(r) \subseteq W.$
In other words, a set of molecule types is closed relative to $\cR'$ if every molecule that can be produced from $W$ using reactions in $\cR'$ is already present in $W$. 
Notice that the full set $X$ is itself closed. The {\em global closure} of $F$ relative to $\cR'$, denoted here as ${\rm gcl}_{\cR'}(F)$, is the intersection of all closed sets that contain $F$ (since $X$ is closed, this intersection is well defined). 
Thus ${\rm gcl}_{\cR'}(F)$  is the unique minimal set of molecule types  containing $F$ that is closed relative to $\cR'$.

We can also consider a {\em constructive closure} of $F$ relative to $\cR'$, denoted here as ${\rm ccl}_{\cR'}(F)$, which is union of the  set $F$ and the set of molecule types $x$ that can be obtained from $F$ by carrying out any finite sequence of reactions from $\cR'$ where, for each reaction $r$ in the sequence, each reactant of $r$ is either an elements of $F$ or a product of a reaction occurring earlier in the sequence, and $x$ is a product of the last reaction in the sequence.

Note that ${\rm gcl}_{\cR'}(F)$ always contains ${\rm ccl}_{\cR'}(F)$ (and these two sets coincide when the CRS is finite) but,  for an infinite CRS, ${\rm ccl}_{\cR'}(F)$ can be a strict subset of ${\rm gcl}_{\cR'}(F)$, even when (A1) holds.
To see this, consider the system $(X, \cR)$ where $X=\{x_0, x_1, x_2, \ldots\}$, $F=\{f\}$, where $\cR'=\{r_0, r_1, r_2, r_3, \ldots\}$ is defined as follows:
$$r_1 = (f \rightarrow x_1);$$
$$ r_j = (f+ x_j \rightarrow x_{j+1}), \mbox{ for all }  j \geq 1;$$
$$r_0 = (x_1+x_2+ \cdots \rightarrow x_0).$$
Then $x_0 \in {\rm gcl}_{\cR'}(F)-{\rm ccl}_{\cR'}(F)$.
In this example, notice that $r_0$ has infinitely many reactants, which violates (A2).  By contrast, when (A2) holds, we have the following result.

\begin{lemma}
\label{lem1}
Suppose that  (A2) holds. Then  ${\rm ccl}_{\cR'}(F) ={\rm gcl}_{\cR'}(F)$.
Moreover, under (A1) and (A2), if $\cR'$ is countable, then this (common) closure of $F$ relative to $\cR'$ is countable also.

\end{lemma}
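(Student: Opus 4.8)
The plan is to prove the two inclusions separately, but since the excerpt already establishes ${\rm ccl}_{\cR'}(F) \subseteq {\rm gcl}_{\cR'}(F)$ in general, I only need the reverse inclusion under (A2). My strategy would be to show that the set $W := {\rm ccl}_{\cR'}(F)$ is \emph{itself} closed relative to $\cR'$ and contains $F$. Because ${\rm gcl}_{\cR'}(F)$ is by definition the \emph{minimal} closed set containing $F$, establishing that $W$ is such a set immediately yields ${\rm gcl}_{\cR'}(F) \subseteq W$, and combined with the known inclusion this gives $W = {\rm gcl}_{\cR'}(F)$. That $F \subseteq W$ is immediate from the definition of the constructive closure, so the whole first assertion reduces to verifying closedness of $W$.

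To verify that $W$ is closed, I would take any $r \in \cR'$ with $\rho(r) \subseteq W$ and show $\pi(r) \subseteq W$. This is where (A2) does the work: since $\rho(r)$ is \emph{finite}, write $\rho(r) = \{a_1, \dots, a_k\}$. Each $a_i$ lies in $W$, so either $a_i \in F$ or there is a finite constructive witness sequence $S_i$ of reactions from $\cR'$ whose final product is $a_i$. I would then concatenate these finitely many sequences into a single finite sequence $S = S_1 S_2 \cdots S_k$ (discarding the $S_i$ for reactants already in $F$) and append $r$ at the end. Concatenation preserves validity: every reactant used inside a block $S_i$ is still either in $F$ or a product of a strictly earlier reaction of $S$, and every reactant of the final reaction $r$ is either in $F$ or equals some $a_i$, which is produced by the last reaction of an earlier block. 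Hence $S$ followed by $r$ is a legitimate constructive sequence exhibiting each element of $\pi(r)$ as a product of its last reaction, so $\pi(r) \subseteq W$. The \emph{main obstacle}, and the reason (A2) is indispensable, is exactly this concatenation step: it produces a \emph{finite} sequence only because there are finitely many reactants to account for. This is precisely what fails in the preceding counterexample, where $r_0$ has infinitely many reactants and no single finite sequence can make all of them available at once.

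For the final assertion I would argue by a direct cardinality count, now permitted to use the constructive description of the closure. Since $\cR'$ is countable, the collection $\bigcup_{n \geq 0} (\cR')^n$ of all finite sequences of reactions from $\cR'$ is a countable union of countable sets, hence countable. Each such sequence is finite and, by (A2), each of its reactions has only finitely many products, so the set of molecule types it can generate is finite. Therefore the set of all molecule types obtainable by \emph{some} finite constructive sequence is a countable union of finite sets, and so is countable; adjoining the finite set $F$ (by (A1)) leaves it countable. By the equality just proved, this set is the common closure, completing the argument. I expect no real difficulty in this part beyond correctly invoking that a countable union of finite sets is countable.
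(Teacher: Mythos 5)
Your proposal is correct and follows essentially the same route as the paper's own proof: the key step in both is that (A2) makes $\rho(r)$ finite, so the finitely many constructive witness sequences can be concatenated and $r$ appended to produce a finite witness for the products, after which minimality of ${\rm gcl}_{\cR'}(F)$ gives equality; the countability claim is handled the same way (countable union of finite sets). The only difference is cosmetic---you argue closedness of ${\rm ccl}_{\cR'}(F)$ directly, while the paper phrases the identical argument as a proof by contradiction.
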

\begin{proof}
Suppose the condition of Lemma~\ref{lem1} holds but that  ${\rm ccl}_{\cR'}(F)$ is not closed; we will derive a contradiction. Lack of closure means there is a molecule
$x$ in $X- {\rm ccl}_{\cR'}(F)$ which is the product of some reaction $r \in \cR'$ that has all its reactants in ${\rm ccl}_{\cR'}(F)$. By (A2), the set of reactants
of $r$ is finite, so we may list them as $x_1, x_2, \ldots, x_k$, and, by the definition of ${\rm ccl}_{\cR'}(F)$, for each $i \in \{1, \ldots, k\}$,  either $x_i \in F$ or there is a  finite sequence $S_i$ 
 of reactions from $\cR'$ that generates $x_i$ starting from reactants entirely in $F$ and using just elements of $F$ or products of reactions appearing
 earlier in the sequence $S_i$.  By concatenating these sequences (in any order) and appending $r$ at the end, we obtain a finite sequence of reactions that generate $x$ from $F$,
which contradicts the assumption that ${\rm ccl}_{\cR'}(F)$ is not closed. If follows that ${\rm ccl}_{\cR'}(F)$ is closed relative to $\cR'$, and since it is clearly a minimal set
containing $F$ that is closed relative to $\cR'$, it follows that ${\rm ccl}_{\cR'}(F) =  {\rm gcl}_{\cR'}(F)$.
That ${\rm ccl}_{\cR'}(F)$ is countable under (A1) and (A2) follows from the fact that any countable union of finite sets is countable.
\hfill$\Box$
\end{proof}
 In view of Lemma~\ref{lem1}, whenever (A2) holds, we will henceforth denote the (common) closure of $F$ relative to $\cR'$ as ${\rm cl}_{\cR'}(F)$.

\bigskip

\noindent {\bf Definition [RAF, and related concepts]}
 Suppose we have a CRS $\cQ=(X, \cR, C, F)$, which satisfies condition (A2).   An RAF  for $\cQ$ is a non-empty subset $\cR'$ of $\cR$ for which 
\begin{itemize} 
\item[(i)] for each $r\in \cR'$,  $\rho(r) \subseteq {\rm cl}_{\cR'}(F)$; and
\item[(ii)] for each $r\in \cR'$, at least one molecule type in ${\rm cl}_{\cR'}(F)$ catalyses $r$. 
\end{itemize}

In words, a non-empty set $\cR'$ of reactions forms  an RAF for $\cQ$ if, for every reaction $r$ in $\cR'$, each reactant of $r$ and at least one catalyst of $r$ is either present in $F$ or able to be constructed from $F$ by using just reactions from within the set $\cR'$. 
\bigskip

\noindent An RAF $\cR'$ for $\cQ$ is said to be a {\em finite RAF} or an {\em infinite RAF} depending on whether or not $|\cR'|$ is finite or infinite. 
The concept of an RAF is a formalisation of a `collectively autocatalytic set', pioneered by Stuart Kauffman \cite{kau1} and \cite{kau2}.  
Since the union of any collection of RAFs is also an RAF, any CRS that contains an RAF  necessarily contains a unique maximal RAF. 
An {\em irrRAF} is an (infinite or finite) RAF that is minimal -- i.e. it contains  no RAF as a strict subset.  In contrast to the uniqueness of the maximal RAF, a finite CRS can have exponentially many irrRAFs \cite{hor4}.

The RAF concept
needs to be distinguished from the stronger notion of a  {\em constructively autocatalytic and F-generated} (CAF) set \cite{mos} which requires that $\cR'$ can be ordered
$r_1, r_2, \ldots, r_N$ so that all the reactants and at least one catalyst of $r_i$ are present in ${\rm cl}_{\{r_1, \ldots, r_{i-1}\}}(F)$ for all $i\in \{1, \ldots, N\}$ (in the initial case where $i=1$, we take ${\rm cl}_\emptyset(F) = F$).  
This condition essentially means that in a CAF, a reaction can only proceed if one of its catalysts is  already available, whereas an RAF could become established by allowing one or more reactions  $r$ to proceed uncatalysed (presumably at a much slower rate) so that later, in some chain of reactions, a catalyst for $r$ is generated, allowing the whole system to `speed up'.   Notice
that although the CRS in Fig.~\ref{fig1} has four RAFs it has no CAF.

\begin{figure}[htb]
\centering
\includegraphics[scale=0.5]{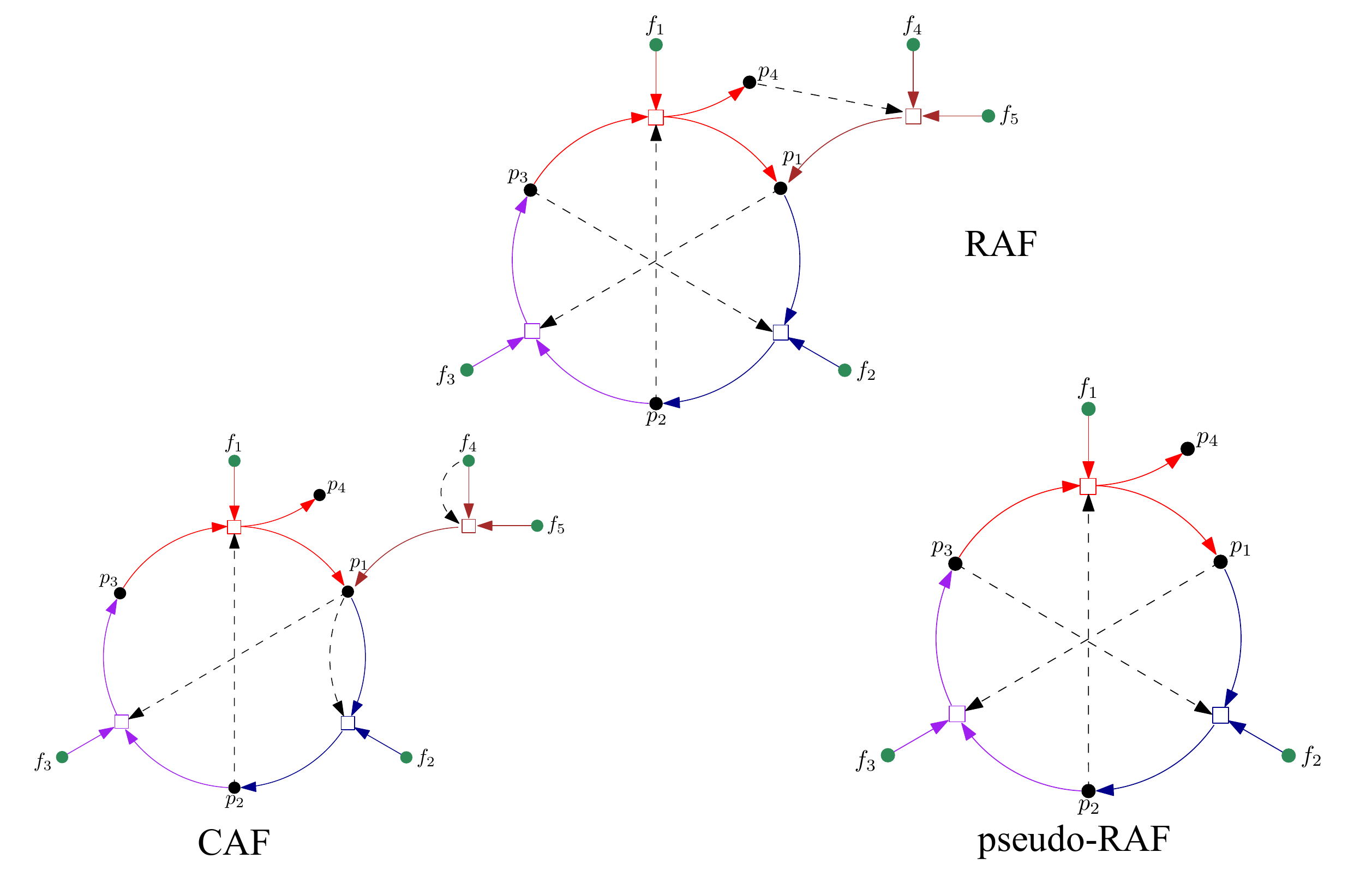}
\caption{Examples of a  finite RAF (that is not a CAF), a finite CAF and a finite pseudo-RAF (that is not an RAF).  In these examples, the molecule types are round nodes (the food set is denoted $f_1, f_2, \ldots$, and $p_1, p_2, \ldots$ are products), reactions are hollow squares, and dashed arrows indicate catalysis. }
\label{fig1b}
\end{figure}

The RAF concept also needs to be distinguished from the weaker notion of a  {\em pseudo-RAF} \cite{ste}, which replaces condition (ii) with the relaxed condition:
\begin{quote}
(ii)$'$: for all $r\in \cR'$, there exists $x\in F$ or $x\in \pi(r)$ for some $r\in \cR'$  such that $(x,r) \in C$.  
\end{quote}
\noindent In other words,  a pseudo-RAF that fails to be an RAF is an autocatalytic system that could continue to persist once it exists, but it can never form from just the food set $F$, since it is not $F$-generated. 

These two  alternatives notions to RAFs are illustrated (in the finite setting) in  Fig.~\ref{fig1b}.  
Notice that every CAF is an RAF and every RAF is a pseudo-RAF, but these containments are strict, as Fig.~\ref{fig1b} shows.

While the notion of a CAF may seem reasonable, it is arguably too conservative in comparison to an RAF, since a reaction can still proceed if no catalyst is present, albeit it at a much slower rate, allowing the required catalyst to eventually be produced.
However relaxing the RAF definition further to a pseudo-RAF is problematic (since a reaction cannot proceed at all, unless all its reactants are present, and so such a system cannot arise spontaneously just from $F$). This, along with other desirable properties of RAFs (their formation requires only low levels of catalysis in contrast to CAFs \cite{mos}), suggests that RAFs are a reasonable candidate for capturing the minimal necessary condition for self-sustaining autocatalysis, particularly in models of the origin of metabolism.

\subsection{Properties of RAFs in an infinite CRS}

As in the finite CRS setting, the union of all RAFs is an RAF, so any CRS that contains an RAF has a unique maximal one.  It is easily seen that an infinite CRS that contains an RAF need not have a maximal finite RAF, even under (A1)--(A3), but in this case, the CRS would necessarily also contain an infinite RAF (the union of all the finite RAFs). 

A natural question is the following:  if an infinite CRS contains an infinite RAF, does it also contain a finite one? 
It is easily seen that even under conditions (A1) and (A2), the answer to this last question is `no'.  We provide three examples to illustrate different ways in which  this can occur.   This is in contrast to CAFs, for which exactly the opposite holds:  if a CRS contains an infinite CAF, then it necessarily contains a sequence of finite ones.  Moreover, two of the infinite RAFs in the following example contain no irrRAFs (in contrast to the finite case, where every RAF contains at least one irrRAF). 

\bigskip

\noindent {\bf Example 1:}    Let $X =\{f, x_1, \ldots, x_n, \ldots\}$,  $F= \{f\}$  and $\cR= \{r_1, r_2, \ldots, r_n, \ldots\}$.
Let  $r_1= ( f \rightarrow x_1)$. We will specify particular CRS's  by describing $r_2, r_3, \ldots$, and the pattern of catalysis as follows.

\begin{itemize}
\item  $\cQ_1$ has a reaction $r_i = (f+x_{i-1} \rightarrow x_i)$ for each $i>1$ and $r_i$ is catalysed by $x_{i+1}$ for each $i\geq 1$.  
\item $\cQ_2$ has a reaction $r_i = (f+f+ \cdots + f  [i \mbox{ times}] \rightarrow x_i)$ for each $i>1$ and $r_i$ is catalysed by  $x_{i+1}$  for each $i\geq 1$. 
\item $\cQ_3$ has the same reactions as $\cQ_2$ but $r_i$ is now catalysed by every $x_j: j>i$.
\end{itemize}

Fig.~\ref{figx} illustrates the three CRS's. 

\begin{figure}[htb]
\centering
\includegraphics[scale=0.8]{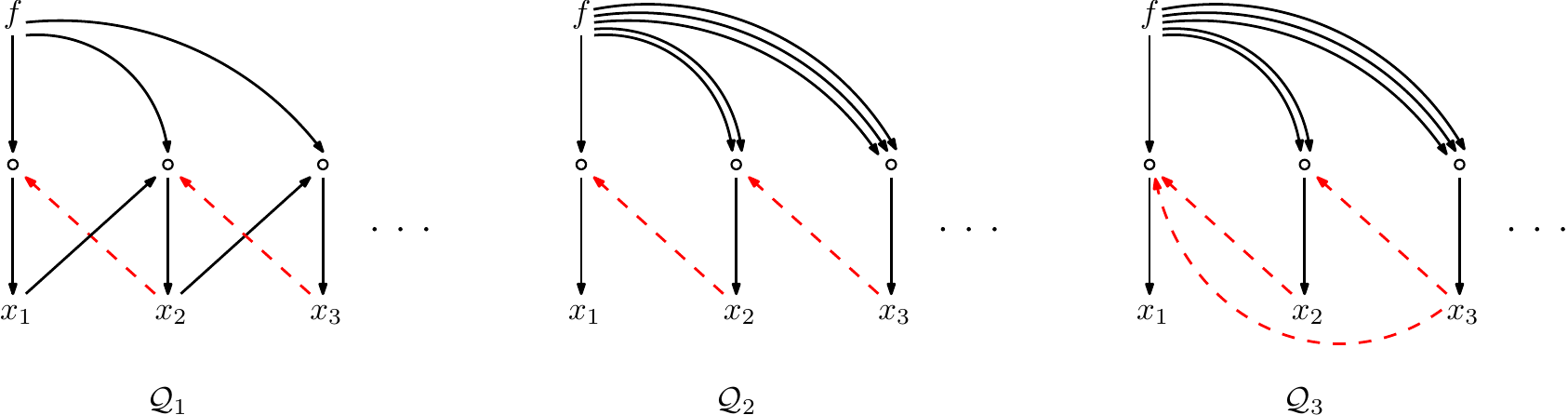}
\caption{Three simple examples of  infinite CRSs that have infinite RAFs but no finite RAF.}
\label{figx}
\end{figure}

Each of $\cQ_1, \cQ_2, \cQ_3$ satisfy (A1) and (A2), but only $\cQ_1$ satisfies (A3). All three CRSs  contain infinite RAFs, but no finite RAF, and no CAF.   More precisely: 
\begin{itemize}
\item
$\cQ_1$ has $\cR$ as its unique RAF (which is therefore an irrRAF).
\item
The RAFs of $\cQ_2$ consist precisely of all subsets of $\{r_j, r_{j+1}, \ldots, \}$ for some $j$.
Thus $\cQ_2$ has a countably infinite number of RAFs but no irrRAF.
\item
The RAFs of $\cQ_3$ consist precisely  of all  infinite subsets of  $\cR$.
Thus, the set of RAFs for $\cQ_3$ in uncountably infinite, and it contains no irrRAF.
\end{itemize}

\section{Determining whether or not a CRS contains an RAF}
\label{determ}
In this section, we assume that both (A1) and (A2) hold.
Given a CRS $\cQ=(X, \cR, C, F)$, consider the following nested decreasing sequence of reactions: 
$\cR_1, \cR_2, \ldots,$ defined by $\cR_1 = \cR$ and for each $i>1$:
\begin{equation}
\label{R1eqx} 
\cR_{i+1} = \{r \in \cR_i: \rho(r) \subseteq {\rm cl}_{\cR_i}(F), \mbox{ and } \exists x \in {\rm cl}_{\cR_i}(F): (x,r) \in C\}.
\end{equation}
Thus, $\cR_{i+1}$ is obtained from $\cR_i$ by removing any reaction that fails to have either all its reactants or at least one catalyst in the closure of $F$ relative to $\cR_i$.  
Let $\mu(\cQ) = \bigcap_{i\geq 1} \cR_i$.
It is easily shown that any RAF $\cR'$ present in $\cQ$ is necessarily a subset of $\mu(\cQ)$ (since $\cR' \subseteq \cR_i$ for all $i\geq 1$ by induction on $i$). Thus if $\mu(\cQ) = \emptyset$ then
$\cQ$ does not have an RAF.  In the finite case there is a strong converse -- if $\mu(\cQ) \neq \emptyset$ then
$\cQ$ has an RAF, and $\mu(\cQ)$ is the unique maximal RAF for $\cQ$ (this is the basis for the `RAF algorithm' \cite{hor2} and \cite{hor3}).   However, in contrast, this result can fail for an infinite CRS, as we now show with a simple example, which also satisfies (A1)--(A3). 

\bigskip

\noindent {\bf Example 2: }  
Consider  the following infinite CRS, $\cQ_4= (X, \cR, C, F)$, where $F= \{f\}$,  and
$X=\{f, s,t,\} \cup \{x_1, x_2, x_3,  \ldots\} \cup \cF,$
where $$\cF = \{f, ff, fff, \cdots, f^{(i)}, \cdots\}$$ (this set can be thought of as all polymers of $f$).
The reaction set is 
$\cR= \{r_1,r_2, r_3, \ldots\} \cup \{ r'_2, r'_3,  \ldots\}$,
where, for all $i \geq 1:$
$$r_i = (f+ f^{(i)} \rightarrow f^{(i+1)}); $$
$$r'_i = (f^{(i)} \rightarrow x_i +s).$$
The pattern of catalysis is defined as follows:
$s$ catalyses $r_1$ and $t$ catalyses $r'_2$, and for all $i>1$
$f^{(i)}$ catalyses $r_i$ and $x_i$ catalyses $r'_{i+1}$.
This CRS is illustrated in Fig~\ref{figy}.
\begin{figure}[htb]
\centering
\includegraphics[scale=0.8]{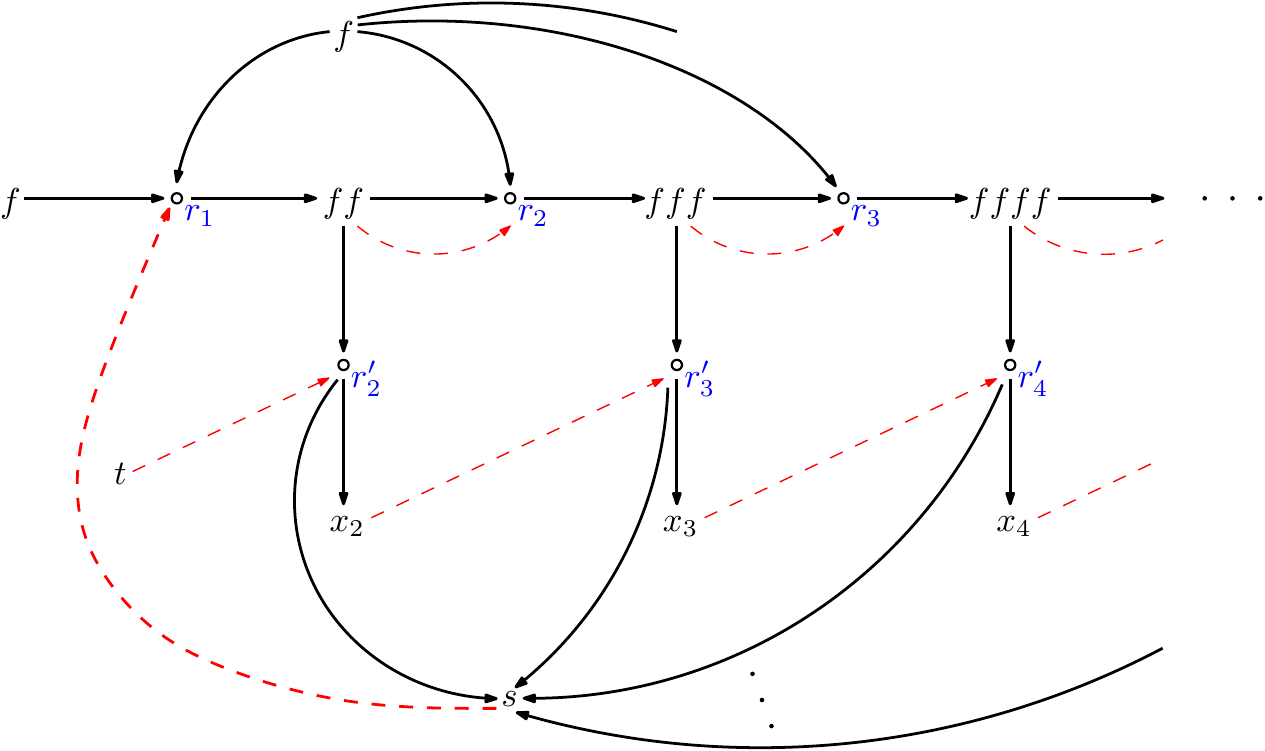}
\caption{An infinite CRS $\cQ_4$ which has no RAF even though $\mu(\cQ_4)$  is non-empty (equal to $\{r_1, r_2, \ldots\}$). This CRS satisfies (A1)--(A3) and (A5), but not (A4).}
\label{figy}
\end{figure}  
Notice that $\cQ_4$ satisfies (A1), (A2) and (A3). 
However, if we construct the sequence $\cR_i$ described above, then as the  sole  catalyst ($t$) of $r'_2$ is neither in the food set, nor generated by any other reaction,
it follows that $r'_2$ will be absent from $\cR_2$, and so $r'_3$ will also be absent from $\cR_3$ (since the only catalyst of $r'_3$ is produced by $r'_2$). Continuing in this way, we obtain $\mu(\cQ_4) = \{r_1, r_2, r_3, \ldots \}$, but this set is not an RAF, since the sole catalyst $s$ of $r_1$ does not lie lie in the closure of $F$ relative to $\{r_1, r_2, r_3, \ldots \}$ -- it was produced by the $r'_j$ reactions and in these have all disappeared in the limit; moreover it is clear that no subset of 
$\cQ_4$  is an RAF.
\hfill$\Box$

\bigskip

Thus, we require slightly stronger hypotheses than just (A1)--(A3)  in order to ensure that $\cQ$ has an RAF when $\mu(\cQ) \neq \emptyset$. This, is provided by the following result.

\begin{proposition}
\label{infp}
Let $\cQ =(X, \cR, C, F)$ satisfy (A1) and (A2). The following then hold:
\begin{itemize}
\item[(i)]  $\mu(\cQ)$ contains every RAF for $\cQ$; in particular, if $\mu(\cQ) = \emptyset$, then $\cQ$ has no RAF.
\item[(ii)]  Suppose that $\cQ$ satisfies both of the following further  conditions:
\begin{itemize}
\item[(A4)] $\bigcap_{i \geq 1} {\rm cl}_{\cR_i}(F) \subseteq {\rm cl}_{\mu(\cQ)}(F)$,   for the sequence $\cR_i$ defined in (\ref{R1eqx}).
\item[(A5)] Each reaction $r \in \cR$ is catalysed by only finitely many molecule types. 
\end{itemize}
\noindent Then $\cQ$ contains an RAF if and only if $\mu(\cQ)$ is non-empty (in which case, $\mu(\cQ)$ is the maximal RAF for $\cQ$).
\end{itemize}
\end{proposition}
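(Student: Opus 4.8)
The plan is to prove (i) by a short induction and then to obtain (ii) by verifying that, under (A4) and (A5), the set $\mu(\cQ)$ itself satisfies the two defining conditions of an RAF whenever it is non-empty. Throughout I would use the monotonicity of closure in the reaction set, namely that $\cR'' \subseteq \cR'$ implies ${\rm cl}_{\cR''}(F) \subseteq {\rm cl}_{\cR'}(F)$; this follows at once from the constructive description of closure in Lemma~\ref{lem1}, since any generating sequence using reactions of $\cR''$ is also one using reactions of $\cR'$.

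For (i), I would show by induction on $i$ that every RAF $\cR'$ satisfies $\cR' \subseteq \cR_i$. The base case $\cR' \subseteq \cR_1 = \cR$ is immediate. Assuming $\cR' \subseteq \cR_i$, for any $r \in \cR'$ the RAF conditions give $\rho(r) \subseteq {\rm cl}_{\cR'}(F) \subseteq {\rm cl}_{\cR_i}(F)$ and a catalyst of $r$ in ${\rm cl}_{\cR'}(F) \subseteq {\rm cl}_{\cR_i}(F)$; by the defining equation (\ref{R1eqx}) this places $r$ in $\cR_{i+1}$. Hence $\cR' \subseteq \bigcap_{i\geq 1}\cR_i = \mu(\cQ)$, and since every RAF is non-empty, $\mu(\cQ)=\emptyset$ forces the absence of any RAF.

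For (ii), the ``only if'' direction is immediate from (i): any RAF is non-empty and contained in $\mu(\cQ)$. The substance is the converse. Assuming $\mu(\cQ)\neq\emptyset$, I would fix $r\in\mu(\cQ)$; since $r\in\cR_i$ for every $i$, equation (\ref{R1eqx}) applied at each stage gives, for all $i\geq 1$, both $\rho(r)\subseteq{\rm cl}_{\cR_i}(F)$ and the existence of some catalyst of $r$ lying in ${\rm cl}_{\cR_i}(F)$. The reactant condition is then easy: $\rho(r)\subseteq\bigcap_{i\geq1}{\rm cl}_{\cR_i}(F)$, and (A4) yields $\rho(r)\subseteq{\rm cl}_{\mu(\cQ)}(F)$, which is condition (i) of the RAF definition.

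The catalyst condition is where (A5) enters and is the main obstacle, because a priori the catalyst witnessing membership of $r$ in $\cR_{i+1}$ may change with $i$, and, as Example 2 shows, no single catalyst need persist into the limit. To control this I would use that the nesting $\cR_{i+1}\subseteq\cR_i$ together with monotonicity gives a decreasing chain ${\rm cl}_{\cR_1}(F)\supseteq{\rm cl}_{\cR_2}(F)\supseteq\cdots$. Writing $C_r$ for the set of catalysts of $r$, the sets $C_r\cap{\rm cl}_{\cR_i}(F)$ form a decreasing sequence of subsets of $C_r$, each non-empty by the previous paragraph. By (A5), $C_r$ is finite, so a decreasing chain of non-empty subsets of $C_r$ has non-empty intersection; hence there is a single catalyst $x\in C_r$ with $x\in\bigcap_{i\geq1}{\rm cl}_{\cR_i}(F)$, and (A4) again gives $x\in{\rm cl}_{\mu(\cQ)}(F)$, establishing condition (ii). Thus $\mu(\cQ)$ is an RAF, and by part (i) it contains every RAF, so it is the maximal one. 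I expect the finiteness step, converting ``for each $i$ some catalyst works'' into ``one catalyst works for all $i$'', to be the crux, since this is precisely the implication that fails without (A5), while the role of (A4) is the essential auxiliary task of transporting the surviving reactants and catalyst from $\bigcap_i{\rm cl}_{\cR_i}(F)$ into ${\rm cl}_{\mu(\cQ)}(F)$.
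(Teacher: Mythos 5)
Your proposal is correct and takes essentially the same approach as the paper: induction for part (i), then (A4) to transport the reactants into ${\rm cl}_{\mu(\cQ)}(F)$, and a finiteness/pigeonhole argument under (A5) to extract a single catalyst lying in every ${\rm cl}_{\cR_i}(F)$ --- exactly the content of the paper's Lemma~\ref{simlem}, which you simply inline rather than state separately. The only difference is cosmetic: you explicitly justify the monotonicity of closure in the reaction set and hence the nestedness of the chain ${\rm cl}_{\cR_1}(F) \supseteq {\rm cl}_{\cR_2}(F) \supseteq \cdots$, steps the paper leaves implicit.
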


Before proving this result, we pause to make some comments and observations concerning the new conditions (A4) and (A5). Regarding Condition (A4), containment in the opposite direction is automatic (by virtue of the fact that
$f(\cap Y_i)  \subseteq \cap_i f(Y_i)$ for any function $f$ and sets $Y_i$), so (A4) amounts to saying that the two sets described are equal.

Notice also that $\cQ_4$ in Example 2 (Fig.~\ref{figy}) satisfies (A5) but it violates (A4), as it must, since $\cQ_4$ does not have an RAF.  To see how $\cQ_4$ violates (A4), notice that
${\rm cl}_{\mu(\cQ_4)}(F) = \cF$, while $\bigcap_{i \geq 1} {\rm cl}_{\cR_i}(F) = \cF \cup \{s\}$.

Condition (A5) is quite strong, but Proposition~\ref{infp} is no longer true if it is removed. To see why, consider the following modification $\cQ'_4$ of $\cQ_4$ in which the only product of $r'_i$  (for $i>1$) is $x_i$, and $x_i$ catalyses $r_1$ for all $i>1$  (in addition to $r'_{i+1}$), as shown in Fig.~\ref{figz}.  Then  ${\rm cl}_{\mu(\cQ'_4)}(F) = \bigcap_{i \geq 1} {\rm cl}_{\cR_i}(F) = \cF$ so (A4) holds; however  $\mu(\cQ'_4) = \{r_1, r_2, \ldots\}$  which, as before, is not an RAF for $\cQ'_4$ since there is no catalyst of
$r_1$ in ${\rm cl}_{\{r_1, r_2, \ldots\}}(F)$.   Notice that (A5) fails for $\cQ'_4$ since $r_1$ has infinitely many catalysts.    Nevertheless, it is possible to obtain a result that dispenses with (A5)  at the expense of a  strengthening (A4), which we will do shortly in Proposition~\ref{infpro}.

\begin{figure}[htb]
\centering
\includegraphics[scale=0.8]{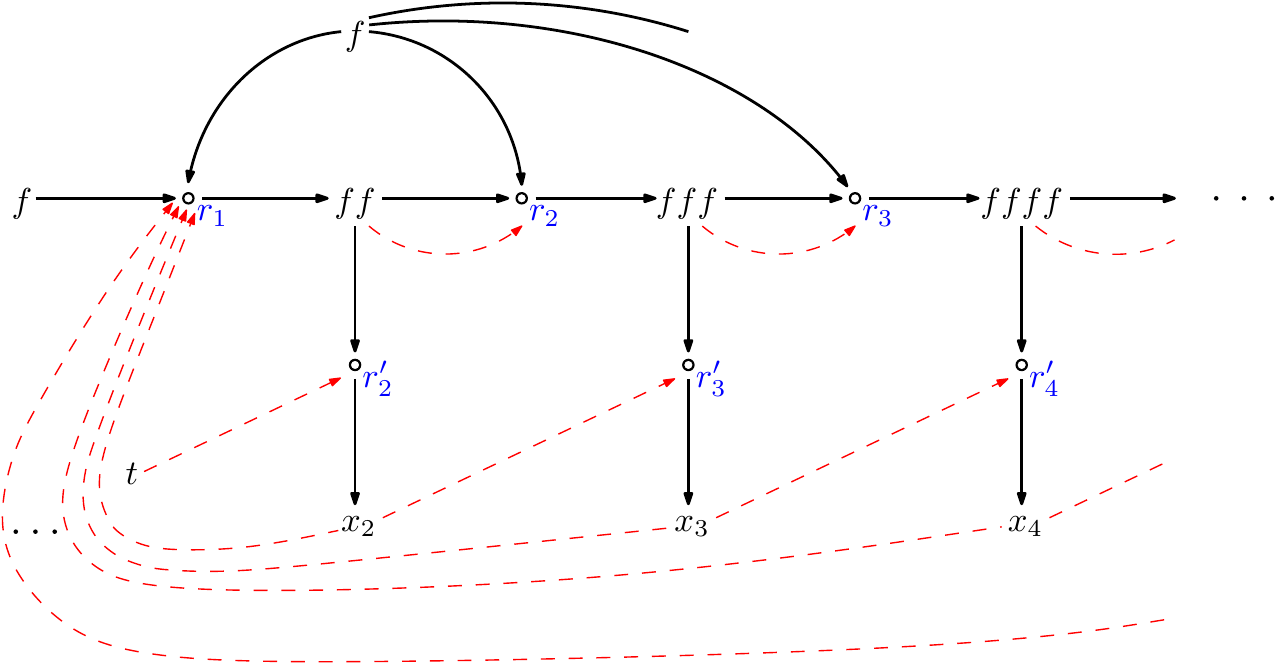}
\caption{An infinite CRS $\cQ'_4$ which has no RAF even though $\mu(\cQ'_4)$  is non-empty (equal to $\{r_1, r_2, \ldots\}$). This CRS satisfies (A1)--(A3) and (A4), but not (A5), nor (A4)$'$.}
\label{figz}
\end{figure}

\bigskip

{\em Proof of Proposition~\ref{infp}:}  Suppose $\cR'$ is any RAF for $\cQ$.  Induction on $i \geq 1$ shows that $\cR' \subseteq \cR_i$ for all $i$, so that $\cR' \subseteq \mu(\cQ)$; in particular, if $\mu(\cQ) = \emptyset$, then $\cQ$ has no RAF.  The proof of part (ii) of Proposition~\ref{infp} relies on a simple lemma.
\begin{lemma}
\label{simlem}
Suppose that $(A_i, i \geq 1)$  is any nested decreasing sequence of subsets and $B$ is a finite set for which $A_i \cap B \neq \emptyset$ for all $i\geq 1$. Then some element of $B$
is present in every set $A_i$. 
\end{lemma}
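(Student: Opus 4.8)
The plan is to reduce everything to the trace of the sequence on the finite set $B$. First I would set $C_i = A_i \cap B$ for each $i \geq 1$. Because $(A_i)$ is nested and decreasing, so is $(C_i)$: we have $C_1 \supseteq C_2 \supseteq \cdots$, and by hypothesis each $C_i$ is non-empty. The entire point is that every $C_i$ now lives inside the \emph{finite} set $B$, so the obstruction that would arise for an infinite $B$ (where a decreasing chain of non-empty sets can have empty intersection) is ruled out.

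The core step is to argue that the sequence $(C_i)$ stabilises. Since each $C_i \subseteq B$ and $B$ is finite, the cardinalities $|C_i|$ form a non-increasing sequence of positive integers, and such a sequence is eventually constant. Hence there is an index $N$ with $|C_i| = |C_N|$ for all $i \geq N$; combined with the inclusions $C_i \subseteq C_N$ that come from nesting, this forces $C_i = C_N$ for all $i \geq N$.

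Finally I would extract the desired element. As $C_N$ is non-empty, choose any $x \in C_N$. For $i \geq N$ we have $x \in C_N = C_i$, while for $i < N$ the nesting gives $C_N \subseteq C_i$, so again $x \in C_i$. Thus $x \in C_i = A_i \cap B$ for every $i \geq 1$; in particular $x \in B$ and $x$ lies in every $A_i$, which is exactly the claimed conclusion.

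I do not anticipate a genuine obstacle, since the statement is elementary; the only thing to guard against is the tempting but flawed shortcut of choosing some $x_i \in A_i \cap B$ separately for each $i$ and hoping these choices coincide. Finiteness of $B$ is precisely what repairs this, and the stabilisation argument above is the clean way to invoke it. Equivalently, one could phrase the whole thing via the finite intersection property: a decreasing chain of non-empty subsets of a finite set has non-empty intersection, and here that intersection is $\bigcap_{i \geq 1}(A_i \cap B)$, any element of which witnesses the lemma.
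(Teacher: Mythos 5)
Your proof is correct. It differs in mechanism, though not in depth, from the paper's: the paper argues by contradiction, assuming every $b \in B$ drops out of some set $A_{i(b)}$, then taking $I = \max\{i(b) : b \in B\}$ (finite because $B$ is finite) and concluding $A_I \cap B = \emptyset$, contradicting the hypothesis. You instead give a direct argument on the trace sets $C_i = A_i \cap B$: their cardinalities form a non-increasing sequence of positive integers, hence stabilise, and nesting then forces the sets themselves to stabilise at some non-empty $C_N$, any element of which witnesses the claim. Both exploit finiteness of $B$ in an essentially one-line way; the paper's version is marginally shorter, while yours is constructive in flavour and yields the slightly stronger observation that the chain $(A_i \cap B)$ is eventually constant, i.e.\ that $\bigcap_{i \geq 1}(A_i \cap B) \neq \emptyset$ is attained on a tail. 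Your closing remark correctly identifies the pitfall (choosing $x_i \in A_i \cap B$ independently for each $i$) that both arguments are designed to avoid.
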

{\em Proof of lemma:} 
Suppose, to the contrary, that for every element $b \in B$,  there is some set $A_{i(b)}$  in the sequence that fails to contain $b$ (we will show this is not possible by deriving a contradiction).  Let $I  = \max\{i(b): b \in B\}$.  Since $B$ is a finite set, 
$I$ is a finite integer, and since the sequence $(A_i, i \geq 1)$ is a nested decreasing sequence, it follows that $A_I \cap B = \emptyset$, a contradiction. 
\hfill$\Box$

\bigskip

\noindent Returning to the proof of Part (ii), suppose that $\mu(\cQ) \neq \emptyset$; we will show that $\mu(\cQ)$ is an RAF for $\cQ$ (and so, by Part (i), the unique maximal RAF for $\cQ$). 
For $r \in \mu(\cQ), \rho(r) \subseteq cl_{\cR_i}(F)$ for each $i$ (otherwise $r$ would not be an element of $\cR_{i+1}$ and thereby fail to lie in $\mu(\cQ)$).  
Thus $\rho(r) \subseteq \bigcap_{i \geq 1} {\rm cl}_{\cR_i}(F) \subseteq {\rm cl}_{\mu(\cQ)}(F)$ by (A4).
It remains to show that $r$ is catalysed by at least one element of ${\rm cl}_{\mu(\cQ)}(F)$.  Let $B_r= \{x \in X: (x,r) \in C\}$.  By (A5), $B_r$ is finite. Moreover, for each $i \geq 1$,
$B_r \cap {\rm cl}_{\cR_i}(F) \neq \emptyset$ (otherwise $r$ would fail to be in $\cR_{i+1}$ and thereby not lie in $\mu(\cQ)$). By Lemma~\ref{simlem},
there is a molecule type $x \in B_r$ that lies in $\bigcap_{i \geq 1}{\rm cl}_{\cR_i}(F)$ and this latter set is contained in $ {\rm cl}_{\mu(\cQ)}(F)$ by (A4).  In summary, every reaction in 
$\mu(\cQ)$ has all its reactants and at least one catalyst present in ${\rm cl}_{\mu(\cQ)}(F)$ and so $\mu(\cQ)$ is an RAF for $\cQ$, as claimed.

\hfill$\Box$

Suppose we now remove Condition (A5) in Proposition~\ref{infp}. In this case, by a slight strengthening of (A4), we obtain a positive result (Proposition~\ref{infpro}). To describe this, we first require a further definition.  Recall that $C$ is the set of pairs $(x,r)$ where molecule type $x$ catalyses reaction $r$.  Given a subset $C'$ of $C$, let
$$\cR[C'] = \{r \in \cR: (x,r) \in C' \mbox{ for some } x\in X\}.$$
Define a nested decreasing sequence of subsets $C_1, C_2, \ldots, $ by $C_1 = C$ and for each $i\geq 1$,
\begin{equation}
\label{Cieqx} 
C_{i+1} = \{(x,r) \in C_i: \{x\} \cup \rho(r) \subseteq {\rm cl}_{\cR[C_i]}(F)\},
\end{equation}
and let $C_\infty = \bigcap_{i \geq 1} C_i$.

\begin{proposition}
\label{infpro}
Let $\cQ$ satisfy (A1) and (A2), as well as the following property:
$$(A4)'  \mbox{ } \bigcap_{i \geq 1} {\rm cl}_{\cR[C_i]}(F) \subseteq {\rm cl}_{R[C_\infty]}(F), \mbox{ for the sequence $C_i$ defined in (\ref{Cieqx})}.$$
Then $\cQ$ has an RAF if and only if $C_ \infty  \neq \emptyset$, in which case $\cR[C_\infty]$ is a maximal RAF for $\cQ$.

\end{proposition}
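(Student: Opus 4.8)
The plan is to follow the structure of the proof of Proposition~\ref{infp}, but to track surviving \emph{catalysis pairs} $(x,r)$ rather than surviving reactions. This is precisely the modification that lets us dispense with (A5): in Proposition~\ref{infp} the finiteness of the catalyst set $B_r$ was needed (via Lemma~\ref{simlem}) to guarantee that \emph{some single} catalyst of $r$ survives in every $\cR_i$; here, by keeping each catalyst glued to its reaction throughout the iteration, a surviving pair automatically supplies a surviving catalyst in the limit. A preliminary observation I would record is that if $\cR'$ is any RAF and we set $C' = \{(x,r) \in C : r \in \cR' \mbox{ and } x \in {\rm cl}_{\cR'}(F)\}$, then $\cR[C'] = \cR'$: every $r \in \cR'$ contributes a witnessing pair to $C'$ by RAF condition (ii), while $C'$ contains no pairs for reactions outside $\cR'$.

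For the forward implication, together with maximality, I would take an arbitrary RAF $\cR'$ with its witness set $C'$ as above and prove by induction on $i$ that $C' \subseteq C_i$. The base case $C' \subseteq C_1 = C$ is immediate. For the inductive step, assume $C' \subseteq C_i$; then $\cR' = \cR[C'] \subseteq \cR[C_i]$ (monotonicity of $\cR[\cdot]$), so by monotonicity of the closure operator in its reaction argument ${\rm cl}_{\cR'}(F) \subseteq {\rm cl}_{\cR[C_i]}(F)$. For any $(x,r) \in C'$ we have $x \in {\rm cl}_{\cR'}(F)$ by construction and $\rho(r) \subseteq {\rm cl}_{\cR'}(F)$ by RAF condition (i), hence $\{x\} \cup \rho(r) \subseteq {\rm cl}_{\cR[C_i]}(F)$, which is exactly the criterion in (\ref{Cieqx}) placing $(x,r)$ in $C_{i+1}$. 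Thus $C' \subseteq C_\infty$. Since $\cR'$ is non-empty, $C'$ (and hence $C_\infty$) is non-empty, and $\cR' = \cR[C'] \subseteq \cR[C_\infty]$. This simultaneously shows that the existence of an RAF forces $C_\infty \neq \emptyset$ and that $\cR[C_\infty]$ contains every RAF for $\cQ$.

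For the reverse implication I would assume $C_\infty \neq \emptyset$ and verify directly that $\cR[C_\infty]$ is an RAF. Let $r \in \cR[C_\infty]$, witnessed by some pair $(x,r) \in C_\infty$. Since $(x,r) \in C_{i+1}$ for every $i \geq 1$, the defining condition of (\ref{Cieqx}) gives $\{x\} \cup \rho(r) \subseteq {\rm cl}_{\cR[C_i]}(F)$ for all $i$, so $\{x\} \cup \rho(r) \subseteq \bigcap_{i \geq 1} {\rm cl}_{\cR[C_i]}(F)$. Applying hypothesis (A4)$'$, this intersection is contained in ${\rm cl}_{\cR[C_\infty]}(F)$, whence $\{x\} \cup \rho(r) \subseteq {\rm cl}_{\cR[C_\infty]}(F)$. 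This single containment delivers both RAF conditions at once: $\rho(r) \subseteq {\rm cl}_{\cR[C_\infty]}(F)$ is condition (i), and $x \in {\rm cl}_{\cR[C_\infty]}(F)$ together with $(x,r) \in C_\infty \subseteq C$ is condition (ii). Hence $\cR[C_\infty]$ is an RAF, and combined with the maximality established above it is the maximal RAF for $\cQ$.

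The only real subtlety — and the step I would treat most carefully — is the book-keeping confirming that the iteration on pairs is consistent with the iteration on reactions, i.e. that $\cR[C_i]$ genuinely plays the role that $\cR_i$ played in Proposition~\ref{infp}. The inductive step of the forward direction is where this must be checked, and it hinges entirely on the monotonicity ${\rm cl}_{\cR'}(F) \subseteq {\rm cl}_{\cR[C_i]}(F)$ following from $\cR' \subseteq \cR[C_i]$. Once that is in place, the argument is almost purely formal, and crucially nowhere invokes finiteness of any catalyst set, which is what confirms that the strengthened hypothesis (A4)$'$ alone can replace the earlier pair of conditions (A4) and (A5).
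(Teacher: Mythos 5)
Your proof is correct and follows essentially the same route as the paper's: the converse direction (induction showing that a set of witnessing catalysis pairs for a given RAF survives into every $C_i$, hence into $C_\infty$) and the forward direction (a surviving pair plus (A4)$'$ yields both RAF conditions for $\cR[C_\infty]$) match the paper's argument step for step. The only cosmetic differences are that you take \emph{all} witnessing pairs rather than selecting one catalyst per reaction, and that you spell out the maximality claim explicitly, both of which are harmless refinements of the same idea.
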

\begin{proof}
Suppose that $C_\infty \neq \emptyset$. Then for any $r \in \cR[C_\infty]$ there exists $x \in X$ such that $(x,r) \in C_\infty$.  It follows
that $(x,r) \in C_i$ for all $i$.  By definition, this means that $\{x\} \cup \rho(r) \subseteq {\rm cl}_{\cR[C_i]}(F)$ for all $i$, and so
$\{x\} \cup \rho(r) \subseteq \bigcap_{i \geq 1} {\rm cl}_{\cR[C_i]}(F).$  Now, by (A4)$'$, this means that $\{x\} \cup \rho(r) \subseteq {\rm cl}_{\cR[C_\infty]}(F)$.
In summary, every reaction in the non-empty set $\cR[C_\infty]$ has all its reactants and at least one catalyst in the closure of $F$ with respect to $\cR[C_\infty]$ and so
$\cR[C_\infty]$ forms an RAF for $\cQ$.  

Conversely, suppose that $\cQ$ contains an RAF $\cR'$; we will show that $C_\infty \neq \emptyset$.  For each $r \in \cR'$, select
a catalyst $x_r$ for $r$ for which $x_r \in {\rm cl}_{\cR'}(F)$.  Let $A= \{(x_r, r): r\in \cR'\}$. We use induction on $i$ to show that $A \subseteq C_i$ for all $i \geq 1$.  Clearly $A \subseteq C=C_1$, so suppose that $A \subseteq C_i$ and select an element
$(x_r, r) \in A$.  By definition, $$\{x\} \cup \rho(r) \subseteq {\rm cl}_{\cR'}(F) = {\rm cl}_{\cR[A]}(F) \subseteq  {\rm cl}_{\cR[C_i]}(F),$$
which means that $(x_r, r) \in C_{i+1}$, establishing the induction step. It follows that 
$\emptyset \neq A \subseteq \bigcap_{i\geq 1} C_i = C_\infty$ and so $C_\infty \neq \emptyset$ as claimed.

\hfill$\Box$ \end{proof}
Notice that, just as for condition (A4),  the condition (A4)$'$ is equivalent to requiring that the two sets described be identical.
Notice also that, although condition (A4) applies to the CRS $\cQ_4'$, condition (A4)$'$ fails, since
$C_\infty= \{(s, r_1), (ff, r_2), (fff, r_3), \ldots \}$ and so ${\rm cl}_{\cR[C_\infty]}(F) = \cF = \{f, ff, fff, \ldots\}$, while $s \in {\rm cl}_{\cR[C_i]}(F)$ for all $i\geq 1$,
and so $\bigcap_{i\geq 1} {\rm cl}_{\cR[C_i]}$ is not a subset of ${\rm cl}_{R[C_\infty]}(F)$.

In summary, a single application of $\mu$ allows us to determine when $\cQ$ has an RAF, provided the additional condition (A4)$'$ holds.  Example 2 showed that some additional assumption of this type is required, however one could also consider other approaches for determining the existence RAFs that do not assume a further condition like (A4)$'$, but instead iterate the map $\mu$.  In other words, consider the following `higher level' sequence of subsets of $\cR$:
$$\cR, \mu(\cQ), \mu^2(\cR), \cdots \mu^k(\cR) \cdots$$
where $\mu^k(\cR) = \mu(\mu^{k-1}(\cR)),$ for each $k\geq 2$.
Again, this forms a decreasing nested sequence of subsets of $\cR$ and so we can consider the set:
$$\nu(\cQ)= \bigcap_{i\geq 1} \mu^{i}(\cR).$$

In the example above for $\cQ_4$ where $\mu(\cQ) \neq \emptyset$, notice that $\mu^2(\cR) = \emptyset$ (and so $\nu(\cQ) = \emptyset$).
It follows from Proposition~\ref{infp} that if $\mu^k(\cQ)= \emptyset$ for any $k \geq 1$ then
$\cQ$ has no RAF.    However, just because $\nu(\cQ) \neq \emptyset$, this does not imply that $\cQ$ contains an RAF as the next example shows.

\bigskip

\noindent {\bf Example 3: }
Consider  the infinite CRS $\cQ_5= (X, \cR, C, F)$ which is obtained by taking a countably infinite number of (reaction and molecule disjoint) copies of $\cQ_4$ (from Example 2) and letting the molecule type $s$ in the $i$-th copy of $\cQ_4$ play the role of the molecule $t$ in the $(i+1)-$th  copy of $\cQ_4$.  In addition, let $r_0$ be the reaction $f+f+f\rightarrow \omega$ (where $\omega$ is an additional molecule)
catalysed by the $s$-products of all the copies of $\cQ_4$. Now $\mu^k(\cQ)$ contains all but the first $k$ copies of $\cQ_4$, plus $r_0$.  Consequently, $\nu(\cQ) = \{r_0\}$ but, as before, this is not an RAF. Notice, however that this example violates condition (A3).
\hfill$\Box$

\bigskip

\section{Finite RAFs in systems satisfying (A1)--(A3)}
\label{finitesec}

We have seen from the last section that applying $\mu$, even infinitely often,  does not seem to provide a way to determine whether a CRS possesses an RAF. However, in most applications, the main interest will generally be in finite RAFs. From the earlier theory it is clear that if  $\mu^k(\cQ)$ is finite for some integer $k \geq 1$ then any RAFs that may exist for $\cQ$ are necessarily finite, and finite in number. Moreover, if $$\emptyset \neq \mu^k(\cQ)=\mu^{(k+1)}(\cQ), \mbox{ for some } k\geq 1,$$
and this set is finite, then $\mu^k(\cQ)$ is  the unique (and necessarily finite) maximal RAF for $\cQ$. 
However, it is also quite possible that a CRS might contain both finite and infinite RAFs, and in this section we describe a characterisation of when an RAF contains a finite RAF.   

Given a CRS $\cQ$ define a sequence $\cR'_1, \cR'_2, \cdots$ of subsets of $\cR$ as follows:
$$\cR'_1 = \{r \in \cR: \rho(r) \subseteq F\}, \mbox{ and }$$ $$ \cR'_i = \{r \in \cR: \rho(r) \subseteq F \cup \bigcup_{1 \leq j<i}\pi(\cR'_j)\}, \mbox{  for each $i>1$}.$$
In words, $\cR_1$ is the set of reactions that have all their reactants in $F$, and for $i>1$ $\cR_i$ is the set of reactions for which each reactant is either an element of $F$ or products of some reaction in $\cR_j$ for $j<i$. 

\begin{proposition}
\label{finiteraf}
Suppose a CRS $\cQ$ satisfies (A1)--(A3).  
Let $\cQ_i' = (X, \cR'_i, C, F)$ for all $i \geq 1$, where $\cR'_i$ is as defined above. 
Then:
\begin{itemize}
\item[(i)] $(\cR'_i: i\geq 1)$ is a nested increasing sequence of finite sets.
\item[(ii)] $\cQ$ has a finite RAF if and only if $\mu(\cQ'_i)\neq \emptyset$ for some $i \geq 1$.
\item[(iii)] If $\mu(\cQ'_i) \neq \emptyset$ for some $i$,  then $\mu(\cQ'_j)$ is a finite RAF for $\cQ$ for all $j \geq i$.
\item[(iv)] Every finite RAF for $\cQ$ is contained in $\mu(\cQ_j)$ for some $j \geq 1$.
\end{itemize}

\end{proposition}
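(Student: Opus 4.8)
My plan is to reduce the whole proposition to the finite RAF theory applied to each truncation $\cQ'_i$, exploiting the structural fact that whether a set $\cR'$ of reactions is an RAF depends only on $\cR'$ itself (through ${\rm cl}_{\cR'}(F)$ and the catalysts of its members), and not on the surrounding reaction set. Hence, for any $\cR'\subseteq\cR'_i$, being an RAF of $\cQ'_i$ is the same as being an RAF of $\cQ$. Throughout I would write $W_i:=F\cup\bigcup_{1\le j<i}\pi(\cR'_j)$, so that $W_1=F$ and $\cR'_i=\{r\in\cR:\rho(r)\subseteq W_i\}$.

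For (i), monotonicity is immediate from $W_{i+1}=W_i\cup\pi(\cR'_i)\supseteq W_i$, which gives $\cR'_i\subseteq\cR'_{i+1}$. Finiteness I would prove by induction on $i$: the base case uses that $F$ is finite by (A1) and so has finitely many subsets, while by (A3) only finitely many reactions $r$ satisfy $\rho(r)=Y$ for a given $Y$, whence $\cR'_1$ is finite; for the inductive step, if the earlier $\cR'_j$ are finite then each $\pi(\cR'_j)$ is finite by (A2), so $W_i$ is finite, and a second appeal to (A3) over the finitely many subsets of $W_i$ makes $\cR'_i$ finite. This is the one place where (A3) is essential, since it is exactly what converts a finite molecule set into a finite reaction set.

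The second ingredient I need is the identity $\bigcup_{i\ge1}W_i={\rm cl}_\cR(F)$: the inclusion $\subseteq$ follows by induction (each $W_i\subseteq{\rm cl}_\cR(F)$), and $\supseteq$ follows by checking that $\bigcup_i W_i$ contains $F$ and is closed relative to $\cR$, since any $r$ whose finite reactant set (A2) lies in this increasing union has $\rho(r)\subseteq W_i$ for some $i$, hence $r\in\cR'_i$ and $\pi(r)\subseteq W_{i+1}$. Combined with (i), this yields the eventual-capture property: every finite subset of ${\rm cl}_\cR(F)$ lies inside a single $W_j$. Now if $\cR'$ is any finite RAF of $\cQ$, then its finite reactant set lies in ${\rm cl}_{\cR'}(F)\subseteq{\rm cl}_\cR(F)$, hence in some $W_j$, so $\cR'\subseteq\cR'_j$. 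By the intrinsic nature of the RAF conditions, $\cR'$ is then an RAF of the finite system $\cQ'_j$, and the finite RAF theorem gives that $\mu(\cQ'_j)$ is the maximal RAF of $\cQ'_j$ and contains $\cR'$. This proves (iv), and since $\cR'\neq\emptyset$ it also gives the forward direction of (ii), as $\mu(\cQ'_j)\supseteq\cR'\neq\emptyset$.

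For the reverse direction of (ii) and for (iii) I would run the finite theory on each $\cQ'_j$, which is legitimate because $\cR'_j$ is finite by (i). If $\mu(\cQ'_i)\neq\emptyset$, then $\mu(\cQ'_i)$ is a nonempty RAF of $\cQ'_i$, hence a finite RAF of $\cQ$, giving the reverse direction of (ii). For (iii), fix such an $i$ and any $j\ge i$: monotonicity gives $\mu(\cQ'_i)\subseteq\cR'_i\subseteq\cR'_j$, so $\mu(\cQ'_i)$ is also an RAF of $\cQ'_j$; thus $\cQ'_j$ contains an RAF, and the finite theorem makes $\mu(\cQ'_j)$ its nonempty finite maximal RAF, which is therefore a finite RAF of $\cQ$. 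I expect the main obstacle to be not any single calculation but getting the two bridges right simultaneously: that (A3) forces every $\cR'_j$ to be finite (so that the finite theorem is even applicable), and that a finite RAF's reactants are all produced by one finite stage $W_j$, the eventual-capture step where finiteness of the RAF and conditions (A1)--(A3) genuinely interact. Once these are in place, the four claims are essentially bookkeeping over the finite theory.
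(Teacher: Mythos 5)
Your proposal is correct and follows essentially the same route as the paper: establish finiteness and nestedness of the $\cR'_i$ via (A1)--(A3), use the fact that RAF-hood of a reaction subset is intrinsic (so the finite RAF theory applies to each truncation $\cQ'_i$), and capture any finite RAF inside a single level $\cR'_j$. Your only cosmetic deviation is the capture step, where you re-derive the identity $\bigcup_{i\geq 1}W_i = {\rm cl}_{\cR}(F)$ from closedness and minimality, whereas the paper gets the same conclusion by citing Lemma~\ref{lem1} (the constructive-closure characterisation) applied to the RAF; likewise your part (iv) uses maximality of $\mu(\cQ'_j)$ in the finite theory where the paper uses $\cR'=\mu(\cR')$ and monotonicity of $\mu$ — both are equivalent bookkeeping.
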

\begin{proof}

By (A1) and (A3),  it follows that $\cR'_1$ is finite, and, by induction, that $\cR'_i$ is finite for all $i>1$. Moreover, if $r \in \cR_i'$ then 
$\rho(r) \subseteq F \cup \bigcup_{1 \leq j<i}\pi(\cR'_j)$ and so $\rho(r) \subseteq F \cup \bigcup_{1 \leq j<i+1}\pi(\cR'_j)$ (i.e. $r \in \cR_{i+1}'$) and so the sets $\cR'_i, i\geq 1$ form an increasing nested sequence. This establishes (i).
For Parts (ii) and (iii), suppose that $\cQ$ contains a finite RAF $\cR'$. Since (A1) and (A2) hold, we can apply Lemma~\ref{lem1} to deduce that every reaction $r \in \cR'$ is an element of $\cR'_i$ for some $i$. Thus, since $\cR'$ is finite, and the sequence $\cR'_i$ is a nested increasing sequence of finite sets, it follows that $\cR' \subseteq \cR'_k$ for some fixed $k$, in which case $\mu(\cQ'_k) \neq \emptyset$.
Conversely, if $\mu(\cQ'_i) \neq \emptyset$, then it is clear from the definitions that $\mu(\cQ'_i)$ is an finite RAF for $\cQ$; moreover, so also is $\mu(\cQ'_j)$ for all $j>i$.   Part (iv) also follows easily from the definitions, since if $\cR'$ is a finite RAF for $\cQ$ then $\cR' \subseteq \cR'_j$ for some $j\geq 1$, and since $\cR'$ is finite we have $\mu(\cR')= \cR'$ and so
$\cR'=\mu(\cR') \subseteq \mu(\cQ_j)$.
This completes the proof.

\hfill$\Box$ \end{proof}
Theorem~\ref{finiteraf} provides an algorithm to search for finite RAFs in any infinite CRS that satisfies (A1)--(A3).    Given $\cQ$, construct $\cR'_1$ and run the (standard) RAF algorithm \cite{hor2} and \cite{hor3} on $\cR'_1$.  If it fails to find an RAF, then construct $\cR'_2$ and run the algorithm on this set, and continue in the same manner. If $\cQ$ contains a finite RAF, then this process is guaranteed to find it, however, there is no assurance in advance of how long this might take (if not constraint is placed on the size of the how large the smallest finite RAF might be).

\section{General setting}
Finally, we show how Proposition~\ref{infpro}  can be reformulated more abstractly in order to makes clear the
underlying mathematical principles; the added generality may also be useful for settings beyond chemical reaction systems.  This uses
the notion of ``$gf$-compatibility'' from \cite{hor1}, which we now explain. 

Suppose we have an arbitrary set $Y$ and  an arbitrary partially ordered set $W$, together with some functions $f:2^Y \rightarrow W \mbox{ and } g: Y \rightarrow W.$    Consider the function
$\psi: 2^Y \rightarrow  2^Y$, where $$\psi(A):=\{y \in A: g(y) \leq f(A)\}.$$
We are interested in the non-empty subsets of $Y$ fixed points of $\psi$, particularly, when   $f$ is {\em monotonic} (i.e., where $A \subseteq B \Rightarrow f(A) \leq f(B)$). 
A subset $A$ of $Y$ is said to be {\em $gf$-compatible} if $A$ is non-empty and $\psi(A) =A$.

The notion of an RAF can be captured in this general setting as follows. Given a CRS $\cQ= (X, \cR, F, C)$ satisfying (A2), take $Y=C$ and $W = 2^X$ (partially ordered by set inclusion), and define $f:2^Y \rightarrow W \mbox{ and } g: Y \rightarrow W$ as follows:
\begin{equation} 
\label{fgeq} 
f(A) = {\rm cl}_{\cR[A]}(F) \mbox{ and } g((x,r)) = \{x\} \cup \rho(r),
\end{equation} 
where, as earlier, $\cR[A]$ is the set of reactions $r \in \cR$ for which there is some $x' \in X$ with $(x',r) \in C$. 
Notice that $f$ is monotonic and when $\cQ$ is finite, the set $f(A)$ can be computed in polynomial time in the size of $\cQ$.
\begin{lemma}
\label{lemcom}
Suppose we have a CRS $\cQ$ satisfying (A2), and with $f$ and $g$ defined as in (\ref{fgeq}).
If $A$ is $gf$-compatible, then $\cR[A]$ is an RAF for $\cQ$.
Conversely, if $\cR'$  is an RAF for $\cQ$, then a $gf$-compatible set $A$ exists with
$\cR[A] = \cR'$.  In particular, $\cQ$ has an RAF if and only if $Y$ contains a $gf$-compatible set.
\end{lemma}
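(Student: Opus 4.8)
The plan is to prove the two implications of Lemma~\ref{lemcom} separately, translating the definition of $gf$-compatibility into the RAF conditions (i) and (ii) via the specific choices $f(A) = {\rm cl}_{\cR[A]}(F)$ and $g((x,r)) = \{x\}\cup\rho(r)$, and recalling that the order on $W=2^X$ is set inclusion, so that $g(y)\leq f(A)$ literally reads $\{x\}\cup\rho(r)\subseteq {\rm cl}_{\cR[A]}(F)$.

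\medskip

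\noindent\textbf{Forward direction.} Suppose $A$ is $gf$-compatible, so $A$ is non-empty and $\psi(A)=A$. Since $A\neq\emptyset$, the set $\cR[A]$ is a non-empty subset of $\cR$, so it is a candidate RAF. Now fix any $r\in\cR[A]$; by definition of $\cR[A]$ there is some $x$ with $(x,r)\in A$. Because $\psi(A)=A$, every element of $A$ survives the filter, so in particular $(x,r)\in\psi(A)$, which by the definition of $\psi$ says $g((x,r))\leq f(A)$, i.e. $\{x\}\cup\rho(r)\subseteq {\rm cl}_{\cR[A]}(F)$. Reading off the two pieces: $\rho(r)\subseteq {\rm cl}_{\cR[A]}(F)$ gives RAF condition (i), and $x\in {\rm cl}_{\cR[A]}(F)$ together with $(x,r)\in C$ (which holds since $A\subseteq C$) gives a catalyst of $r$ in the closure, i.e. condition (ii). As $r$ was arbitrary in $\cR[A]$, the set $\cR[A]$ is an RAF.

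\medskip

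\noindent\textbf{Converse direction.} Suppose $\cR'$ is an RAF. The natural move is to build a witnessing $A$ by selecting, for each $r\in\cR'$, one catalyst $x_r\in {\rm cl}_{\cR'}(F)$ guaranteed by RAF condition (ii), and setting $A=\{(x_r,r): r\in\cR'\}$. I would first check $\cR[A]=\cR'$: the inclusion $\cR[A]\subseteq\cR'$ is immediate since every pair in $A$ has second coordinate in $\cR'$, and conversely each $r\in\cR'$ contributes $(x_r,r)\in A$ so $r\in\cR[A]$. From $\cR[A]=\cR'$ we get $f(A)={\rm cl}_{\cR[A]}(F)={\rm cl}_{\cR'}(F)$. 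To verify $\psi(A)=A$, note $\psi(A)\subseteq A$ always; for the reverse, take $(x_r,r)\in A$ and check $g((x_r,r))=\{x_r\}\cup\rho(r)\subseteq {\rm cl}_{\cR'}(F)=f(A)$, where $x_r\in{\rm cl}_{\cR'}(F)$ by choice and $\rho(r)\subseteq{\rm cl}_{\cR'}(F)$ by RAF condition (i). Hence every element of $A$ lies in $\psi(A)$, so $\psi(A)=A$, and since $\cR'$ is non-empty $A$ is non-empty, making $A$ a $gf$-compatible set with $\cR[A]=\cR'$. The final ``in particular'' claim is then the logical conjunction of the two directions.

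\medskip

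\noindent The argument is essentially a bookkeeping translation and I do not anticipate a deep obstacle, but the step requiring genuine care is confirming $\cR[A]=\cR'$ in the converse so that the closure $f(A)$ computed from $\cR[A]$ actually coincides with ${\rm cl}_{\cR'}(F)$; without this equality the monotonicity of $f$ alone would only give an inclusion, and the clean fixed-point identity would break. A secondary point worth stating explicitly is that $A\subseteq C$ throughout, so that membership $(x,r)\in A$ legitimately certifies $x$ as a catalyst of $r$ in the sense of the catalysis set $C$; this is what lets condition (ii) be read off directly rather than merely asserting $x\in{\rm cl}_{\cR[A]}(F)$.
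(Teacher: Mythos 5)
Your proof is correct and follows essentially the same route as the paper: the forward direction reads RAF conditions (i) and (ii) off the fixed-point identity $\psi(A)=A$, and the converse constructs $A=\{(x_r,r):r\in\cR'\}$ from chosen catalysts and verifies it is a fixed point of $\psi$ with $\cR[A]=\cR'$. Your write-up is somewhat more explicit than the paper's (notably in checking $\cR[A]=\cR'$ so that $f(A)={\rm cl}_{\cR'}(F)$, a step the paper leaves implicit), but there is no substantive difference in approach.
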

\begin{proof}
If $A$ is $gf$-compatible subset of $Y$, then for $\cR' = \cR[A]$, each reaction $r \in \cR$ has at least one molecule type $x \in X$ for which $(x,r) \in A$.  $gf$-compatibility ensures
that $g((x,r)) \subseteq f(A)$, in other words, $\{x\} \cup \rho(r) \subseteq {\rm cl}_{\cR'}(F) $ for some catalyst $x$ of $r$.  This holds for every $r \in \cR'$, so $\cR'$ is an RAF for $\cQ$.
Conversely, if $\cR'$ is an RAF, then for each reaction $r \in \cR'$, we can choose an associated catalyst $x_r$ so that $\{x_r\} \cup \rho(r) \subseteq {\rm cl}_{\cR'}(F) $. Then  $A=\{(x_r, r): r \in \cR'\}$
is a $gf$-compatible subset of $Y$, with $\cR[A] = \cR'$.
\hfill$\Box$ \end{proof}

The  problem of finding a $gf$-compatible set (if one exists) in a general setting (arbitrary $Y$, and $W$, not necessarily related to chemical reaction networks) can be solved in general polynomial time when $Y$ is finite and $f$ is monotonic and computable in finite time. This provides a natural generalization of the classical RAF algorithm.  In \cite{hor2}, we showed how other problems (including a toy problem in economics) could by  formulated within this more general framework.

However, if we allow the set $Y$ to be infinite, then monotonicity of $f$ needs to be supplemented with a further condition on $f$.
We will consider a  condition (`$\omega$-continuity'), which generalizes (A4)$'$, and  that applies automatically when $Y$ is finite.
We say that $f: 2^Y \rightarrow W$ is (weakly)  $\omega$-continuous if, for any nested descending chain $A_i, i \geq 1$ of sets, we have:
\begin{equation}
\label{glbeq}
f(\bigcap_{i \geq 1}A_i) \mbox{  is a greatest lower bound for }  \{f(A_i), i \geq 1\}.
\end{equation}
Recall that an element in a partially ordered set need not have a greatest lower bound (glb);  but if it does, it has a unique one. Notice that when $Y$ is finite, this property holds trivially, since then $f(\bigcap_{i \geq 1}A_i) = f(A_n)$ for the last set $A_n$ in the (finite) nested chain.

For a subset $A$ of $Y$ and $k\geq 1$, define $\psi^{(k)}(A)$  to be the result of applying function $\psi$ iteratively $k$ times starting with $A$.  Thus  $\psi^{(1)}(A)=\psi(A)$ and for $k\geq 1$, $\psi^{(k+1)}(A)=\psi(\psi^{(k)}(A))$.  Taking the particular interpretation of $f$ and $g$ in (\ref{fgeq}),  the sequence $\psi^{(k)}(Y)$ is nothing more than the sequence $C_k$ from (\ref{Cieqx}). 

Notice that the sequence $(\psi^{(k)}(A), k\geq 1)$ is a nested decreasing sequence of subsets of $Y$, and so we may define the set:
$$\overline{\psi}(A) :=\lim_{k\rightarrow \infty}\psi^{(k)}(A) = \bigcap_{k \geq 1} \psi^{(k)}(A),$$
which is a (possibly empty) subset of $Y$ (in the setting of Proposition~\ref{infpro}, $\overline{\psi}(A) = C_\infty$). 

Given (finite or infinite) sets $Y,W$, where $W$ is partially ordered, together with functions $f: 2^Y \rightarrow W \mbox{ and } g: Y \rightarrow W$, it is routine to  verify that  the following properties hold:
\begin{itemize}
\item[(i)] The $gf$-compatible subsets of $Y$ are precisely the non-empty subsets of $Y$ that are fixed points of $\psi$;
\item[(ii)] If $f$ is monotonic then $\overline{\psi}(Y)$ contains all $gf$-compatible subsets of $Y$; in particular, if  $\overline{\psi}(Y) = \emptyset$, then there is no $gf$--compatible subset of $Y$.
\item[(iii)] If  $f$ is  $\omega$-continuous then $\overline{\psi}(Y)$ is $gf$-compatible, provided it is non-empty; in particular, if $f$ is  monotonic and $\omega$-continuous then 
 (by (ii)) there  a $gf$-compatible subset of $Y$ exists if and only if $\overline{\psi}(Y)$ is nonempty.
\item[(iv)] Without the assumption that $f$ is weakly $\omega$-continuous in Part (iii),  it is possible for 
$\overline{\psi}(Y)$ to fail to be $gf$-compatible when $Y$ is infinite, even if $f$ is monotone.
\end{itemize}
The proof of Parts (i)--(iii) proceeds exactly as in \cite{hor1}, with the addition of one extra step required to justify Part (iii), assuming $\omega$-continuity. Namely, Condition (\ref{glbeq}) ensures that $\psi: 2^Y \rightarrow 2^Y$ is also $\omega$-continuous in the sense that for any nested descending chain $A_i, i \geq 1$ of sets, we have:
\begin{equation}
\label{psieq}
\psi(\bigcap_{i \geq 1}A_i) = \bigcap_{i \geq 1} \psi(A_i),
\end{equation}
and so $\psi(\overline{\psi}(Y)) =\overline{\psi}(Y)$. The proof of (\ref{psieq}) from (\ref{glbeq}) is straightforward: firstly, $\subseteq$ holds for {\em any} function $\psi$, while if $y \in \bigcap_{i \geq 1} \psi(A_i)$, then, by definition of $\psi$, $y \in A_i$ for all $i$ and $g(y) \leq f(A_i)$ for all $i\geq 1$ and so $y \in \bigcap_{i\geq 1} A_i$, and $g(y) \leq f(A_i)$ for all $i\geq 1$. Now, since $w=f(\bigcap_{i \geq 1}A_i)$ is a  glb  of $\{f(A_i): i \geq 1\}$, we have $g(y) \leq w$ for all $i$ (i.e.  $g(y) \leq f(\bigcap_{i \geq 1}A_i)$) and so $y \in \psi(\bigcap_{i \geq 1}A_i)$.  Part (iv) follows directly from Parts (ii) and (iii).

For Part (vi), consider the infinite CRS $\cQ_4$ in Example 2.    As above, take 
$Y= C,  W=2^X$ and, for $A \in 2^Y$, with $f$ and $g$ defined as in (\ref{fgeq}). 
Then  $\overline{\psi}(Y) = A$, where $A=\{(s, r_1), (ff, r_2), (fff, r_3), \ldots \}$ however, $A$ is not $gf$-compatible, since $(s, r_1) \in A$ and $g((s, r_1)) =\{s, f\}$ but this is not a subset of $f(A) = {\rm cl}_{\cR[A]}(F) = \cF$ since $s \not\in \cF$.
In this example, $f$ fails to be weakly $\omega$-continuous, and the argument is analogous to where we showed earlier that $\cQ'_4$ fails to satisfy (A4)$'$.  More precisely, for each $i\geq 1$, let $A_i =\{(c_r, r): r\in \cR_i\}$,   where $\cR_i$ is defined in (\ref{R1eqx}) and where, for each reaction $r \in \cQ_4$, $c_r$ is the unique catalyst of $r$.  Then  $f(A_i) = \cF \cup \{s\} \cup \{x_{i+1}, x_{i+2}, \ldots\}$ and so
$\bigcap_{i  \geq 1} f(A_i) = \cF \cup \{s\}$. 
However, $\bigcap_{i \geq 1} A_i =A$ 
  and so $f(\bigcap_{i \geq 1} A_i) = f(A) = \cF$,  which differs from the  glb of $\{f(A_i), i \geq 1\}$, namely $\bigcap_{i  \geq 1} f(A_i) = \cF \cup \{s\}$.
\hfill$\Box$

\section{Concluding comments}

The examples in this paper are particularly simple -- indeed mostly we took the food set to consist of just a single molecule, and reactions often had only one possible catalyst.  In reality more `realistic' examples can be constructed, based
on polymer models over an alphabet, however the details of those examples tends to obscure the underlying principles so we have kept with our somewhat `toy' examples in order that the reader can readily verify certain statements.

Section~\ref{finitesec} describes a process for determining whether an arbitrary infinite CRS (satisfying (A1)--(A3)) contains a finite  RAF. However, from an algorithmic point of view, Proposition~\ref{finiteraf}
is  somewhat limited, since the process described is not guaranteed to terminate in any given number of steps.  
If no further restriction is placed on the (infinite) CRS, then it would seem
difficult to hope for any sort of meaningful algorithm; however, if the CRS has a `finite description' (as do our main examples above),  then the question of the algorithmic decidability of the existence of an RAF or of a finite RAF arises.  
More precisely, suppose an infinite CRS $\cQ = (X, \cR, C, F)$ consists of (i) a countable set of molecule types $X= \{x_1, x_2, \ldots \}$, where we may assume (in line with  (A1)) that
$F = \{x_i: i<K\}$, for some finite value $K$, and (ii) a countable set 
$\cR = \{r_1, r_2, \ldots\}$ of reactions, 
where 
$r_i$ has a finite set $\alpha(i)$ of reactants, a finite set  $\beta(i)$ of products, and a finite or countable set $\gamma(i)$ of catalysts,  where  $\alpha, \beta$ and $ \gamma$ are  computable (i.e. partial recursive) set-valued functions defined on the positive integers. 
Given this setting, a possible question for further investigation is whether (and under what conditions) there exists an algorithm to determine whether or not $\cQ$ contains an RAF, or more specifically a finite RAF  (i.e. when is this question decidable?).

\section{Acknowledgements}
The author thanks the Allan Wilson Centre for funding support, and Wim Hordijk for some useful comments on an earlier version of this manuscript.   I also thank Marco Stenico (personal communication) for pointing out that $\omega$-consistency is required for Part (iii) of the  $gf$-compatibility result above  when $Y$ is infinite, and for a reference to a related fixed-point result in domain theory (Theorem 2.3 in \cite{Stoltberg:94}), from which this result can also be derived.

\end{document}